\title{Distributed Energy Beamforming with \\ One-Bit Feedback }
\author{Seunghyun Lee and  Rui Zhang \\ ECE Department, 
National University of Singapore, Singapore \\ E-mail: s.lee@u.nus.edu, elezhang@nus.edu.sg}
\newtheorem{corollary}{\underline{Corollary}}[section]
\newtheorem{proposition}{\underline{Proposition}}[section]
\newtheorem{remark}{\underline{Remark}}[section]
\def\l{\left}
\def\r{\right}
\def\({\left(}
\def\){\right)}
\def\b0{{\mathbf{0}}}
\def\cA{\mathcal{A}}
\newcommand{\nn}{\nonumber}
\begin{document}
\maketitle \thispagestyle{empty}
\begin{abstract}
Energy beamforming (EB) is a key technique for achieving efficient radio-frequency (RF) transmission enabled wireless energy transfer (WET). By optimally designing the waveforms from multiple energy transmitters (ETs) over the wireless channels, they are constructively combined at the energy receiver (ER) to achieve an EB gain that scales with the number of ETs. However, the optimal design of transmit waveforms requires accurate channel state information (CSI) at the ETs, which is challenging to obtain in practical WET systems. In this paper, we propose a new channel training scheme to achieve optimal EB gain in a distributed WET system, where multiple separated ETs adjust their transmit phases to collaboratively send power to a single ER in an iterative manner, based on one-bit feedback from the ER per training interval which indicates the increase/decrease of the received power level from one particular ET over two preassigned transmit phases. The proposed EB algorithm can be efficiently implemented in practical WET systems even with a large number of distributed ETs, and is analytically shown to converge quickly to the optimal EB design as the number of feedback intervals per ET increases. Numerical results are provided to evaluate the performance of the proposed algorithm as compared to other distributed EB designs.  \end{abstract}

\begin{IEEEkeywords}
Wireless energy transfer, energy beamforming, distributed beamforming, channel training, one-bit feedback.
\end{IEEEkeywords}

\setlength{\baselineskip}{1.3\baselineskip}

\section{Introduction}
Radio-frequency (RF) transmission enabled wireless energy transfer (WET) is a promising technique to achieve perpetually operating wireless devices by providing energy through the air. It has recently emerged as a new area of research in wireless communications, to charge communication nodes of typically low-power consumptions such as RF identification (RFID) devices and sensors distributed in a wide area (see, e.g., \cite{J_BHZ:2015, J_LWNKH:2015, A_BZZ} and the references therein). In particular, two appealing lines of research are  \emph{wireless powered communication network} (WPCN) \cite{J_JZ:2014} and \emph{simultaneous wireless information and power transfer} (SWIPT) \cite{J_ZH:2013}. In WPCN, wireless devices are powered by dedicated downlink WET for their uplink wireless information transmission (WIT); while in SWIPT, a dual use of RF signals is considered for simultaneous WET and WIT at the same time in the downlink. In both WPCN and SWIPT, efficient design of WET to compensate the significant power loss of RF signals over distance is essential.  

\begin{figure}
\centering
\subfigure[In-band WET]{
\centering
\includegraphics[width=7cm]{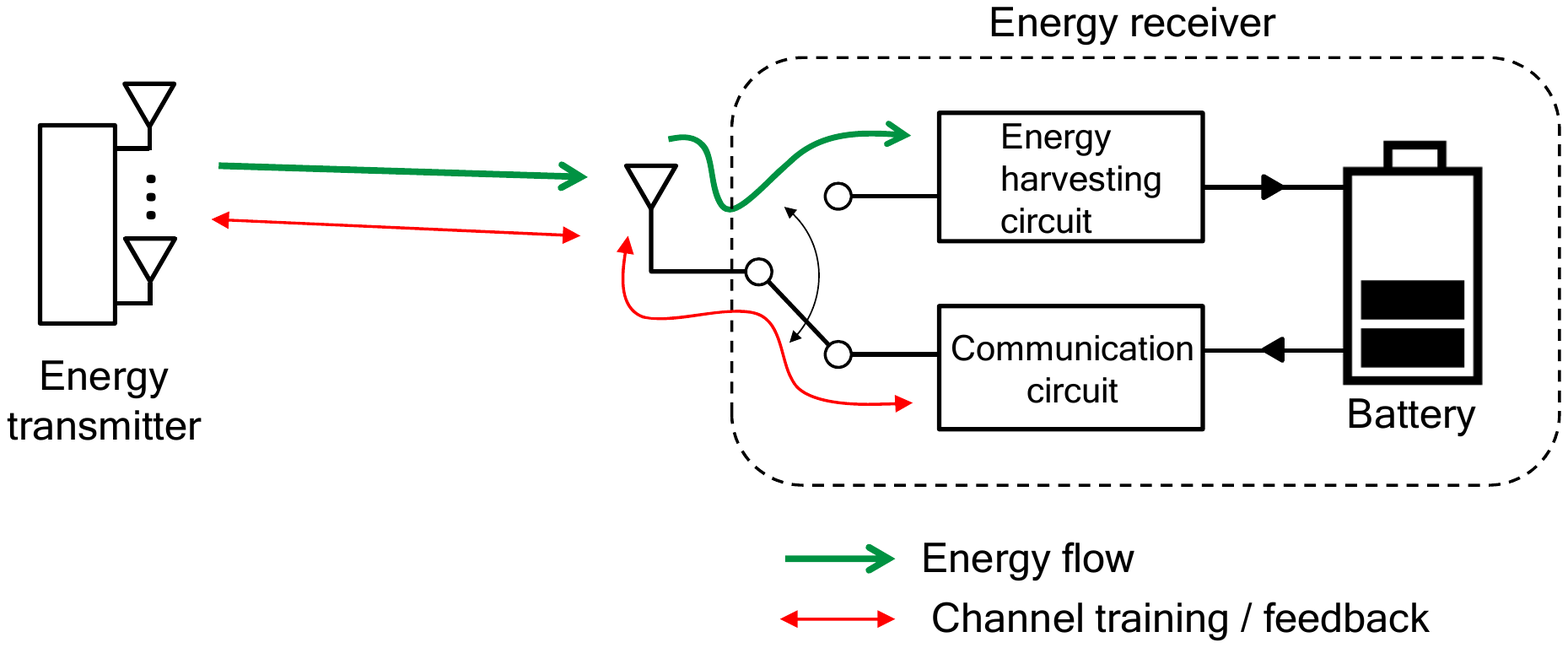}\label{Fig:InBand}} 
\subfigure[Out-band WET]{
\centering
\includegraphics[width=7cm]{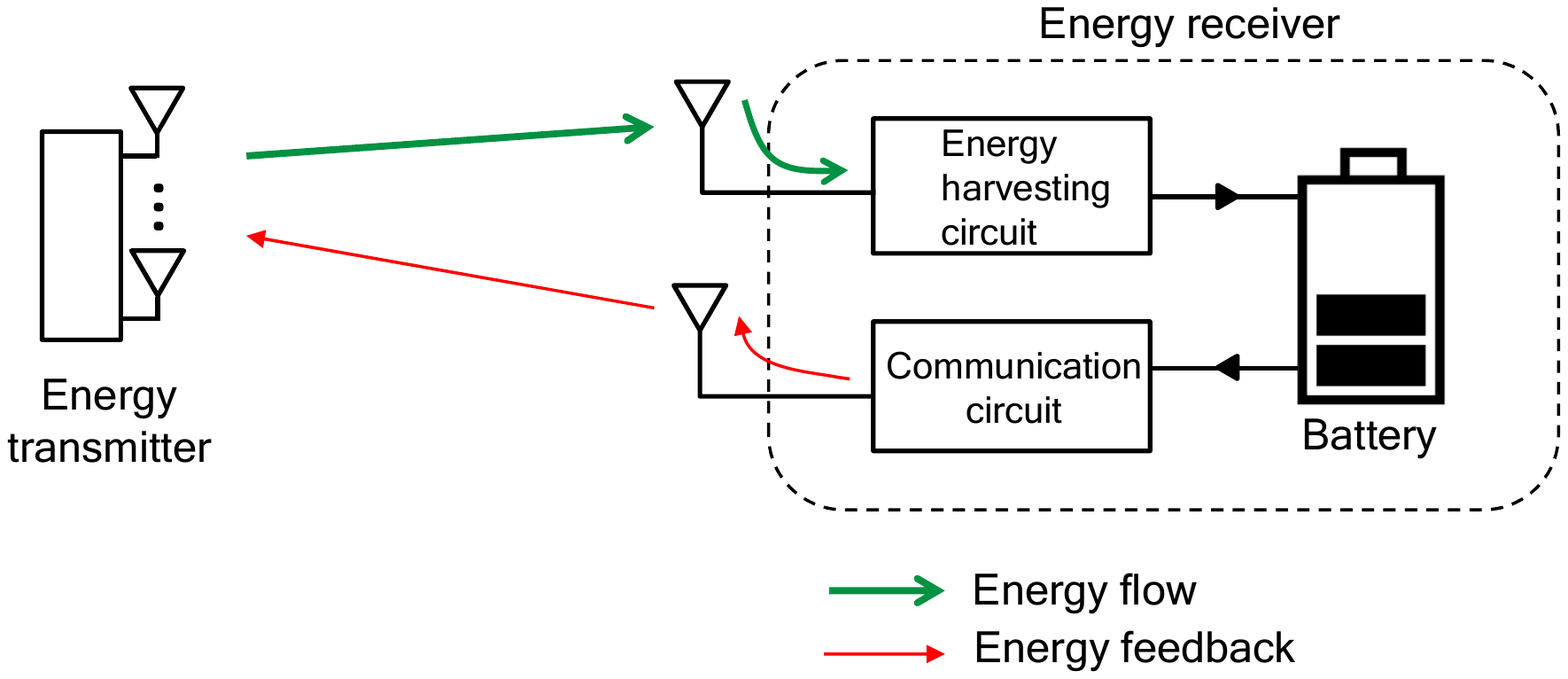}\label{Fig:OutBand}} 
\caption{Comparison of in-band and out-band WET.}
\label{Fig:ER}
\end{figure}

In practical WET systems, \emph{energy beamforming} (EB) is a key technique to significantly enhance the WET efficiency. With EB, the signal waveforms from multiple transmit antennas are optimally designed such that over different wireless channels they are constructively combined at a destined energy receiver (ER) to maximize the received signal amplitude or average power. However, in practice the energy transmitters (ETs)  need to acquire accurate knowledge of the channel state information (CSI) to maximize the EB gain. To this end, various CSI acquisition methods have been considered in the literature depending on the type of ER model used. In general, two practical ER models are considered for \emph{in-band} and \emph{out-band} WET and WIT, respectively, as shown in Fig.~\ref{Fig:ER}. For in-band WET, as shown in Fig.~\ref{Fig:InBand}, the WET and WIT are assumed to be implemented in the same frequency band, and  a single antenna is used at the ER for both energy harvesting and communication in a time-division-duplexing (TDD) manner. In this case, as in conventional wireless communication, the ET can send pilot signals to the ER that uses the communication circuit to estimate the channel and then send back the channel estimations to the ET for implementing EB \cite{J_YHG:2014}. However, this forward training approach incurs significant training and feedback overhead as the number of antennas at the ET increases. To overcome this difficulty, an alternative approach of reverse training is proposed in \cite{J_ZZ:2015, J_ZZ:2015_b}, where  training signals are sent by the ER to the ET to estimate the reverse-link channel that is assumed to be reciprocal of the forward-link channel over which the EB is implemented; as a result, the training overhead is independent of the number of antennas at the ET. On the other hand, for the case of out-band WET where the WET and WIT are implemented over different frequency bands, as shown in Fig.~\ref{Fig:OutBand}, two antennas are used at the ER for energy harvesting and communication over two orthogonal frequencies, respectively. Different from in-band WET, the out-band WET can be conducted with WIT at the same time, thus improving the efficiency. However, unlike the in-band WET case, the ET cannot obtain the channel to the energy harvesting antenna at the ER by conventional forward/reverse training methods as the communication antenna at the ER operates at a different frequency from that for the energy harvesting antenna in the out-band WET case. Therefore, a new channel estimation method based on the feedback of the measured power level at the energy harvesting antenna of the ET is proposed in \cite{J_XZ:2014, A_XZ:2015}. In this method, the energy feedback from the ER is used for iteratively localizing the target multiple-input multiple-output (MIMO) channel by applying the cutting-plane method in convex optimization. 

\begin{figure} 
\centering
\includegraphics[width=7cm]{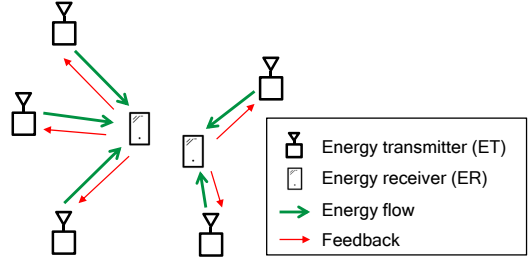}
\caption{A distributed WET system with ER feedback for channel training.} \label{Fig:SystemModel}
\end{figure}

In this paper, we consider the optimal EB design in a distributed WET system with multiple separated single-antenna ETs that cooperatively send power to one or more single-antenna ERs under the out-band WET model, as illustrated in Fig.~\ref{Fig:SystemModel}. Note that different from \cite{J_XZ:2014, A_XZ:2015}, where all antennas are equipped at one single ET and thus their channels to the ER can be jointly estimated, the distributed EB considered in this paper needs to be implemented over different ETs without the need of their centralized processing. In practice, EB by distributed ETs has the advantage of avoiding high power intensity from any single ET to the ER as compared to the conventional EB by a single multi-antenna ET \cite{J_ZH:2013, J_YHG:2014, J_ZZ:2015, J_ZZ:2015_b, J_XZ:2014, A_XZ:2015}, thus significantly improving its safety in operation. Motivated by this, in this paper we propose a new training scheme for distributed EB. First, we propose a new phase adaptation algorithm for a single ET to tune the phase of its transmit waveform to optimally align to the phase of the sum-signal from all other ETs (each assumed to transmit with a fixed phase) at the ER. Specifically, this ET transmits with a pair of two alternating phases at each feedback interval to iteratively find the optimal phase based on one-bit feedback from the ER, indicating which one of the two phases used at each interval results in larger harvested power than the other. It is shown that, with each feedback bit, the algorithm can reduce the size of the target set containing the optimal phase by half, and thus can converge to the optimal phase exponentially fast with the increasing number of feedback intervals. Based on this new phase adaptation algorithm, a distributed EB protocol is then proposed where the ETs in the system take turns to sequentially adapt their transmit phases with those of all other active ETs being fixed, which is shown to efficiently converge to the optimal EB solution as the number of per-ET feedback intervals increases. 

It is worth noting an existing one-bit feedback scheme for distributed beamforming proposed in \cite{J_MHMB:2010}, referred to as \emph{random phase perturbation} scheme in this paper. In this scheme, transmitters independently adjust their transmit phases via random perturbation based on one-bit feedback from the receiver, which indicates the increase or decrease of the current signal-to-noise-ratio (SNR) as compared to its recorded highest SNR. This scheme is also applicable to distributed EB for WET. By simulation, it is shown that our proposed distributed EB scheme outperforms that in \cite{J_MHMB:2010} in terms of convergence speed to the optimal EB solution.

\section{System Model}
As shown in Fig.~\ref{Fig:SystemModel}, we consider a distributed WET system where $M>1$ single-antenna ETs collaboratively send wireless energy to  $K\geq 1$ single-antenna ERs. In this paper, we focus on the special case of a single ER, i.e., $K=1$, while the general case of $K>1$ ERs will be considered in the journal version of this paper. For the single ER, we adopt the out-band WET model in Fig.~\ref{Fig:OutBand}, and assume that it can send energy feedback to all ETs at a given frequency different from that for the WET.

Since energy signals carry no information, for simplicity we assume the transmit signal of ET$_m$, $m=1,...,M$, to be an unmodulated carrier signal with phase offset $\phi_m\in [-\pi, \pi)$, which is expressed as
 \begin{equation} \label{Eq:TxSignal1}
s_m(t) =  \sqrt{2P}\cos(2\pi f_c t + \phi_m), 
\end{equation}
where $P$ denotes the transmit power of each ET, and $f_c$ is the carrier frequency. Each transmitted signal propagates through a  multi-path wireless channel in general. Thus, the received signal $r(t)$ at  ER is expressed as 
\begin{align} \label{Eq:RxSignal1}
r(t) = \sum_{m=1}^M \sum_{l=1}^{L_m} a_{m,l} \sqrt{2P}  \cos(2\pi f_c (t - \tau_{m,l}) + \phi_m),
\end{align}
where $L_m$ is the number of signal paths from ET$_m$ to ER, and $a_{m,l}$, $\tau_{m,l}$ are the signal attenuation and delay of the $l$th path, respectively, with $l=1,...,L_m$. The received signal given in \eqref{Eq:RxSignal1} can be simplified as
\begin{equation} \label{Eq:RxSignal2}
r(t) = \sqrt{2P}\sum_{m=1}^M \sqrt{\beta_m} \cos(2\pi f_c t + \phi_m - \theta_m),
\end{equation}
where $\beta_m$ and $\theta_m$ are the overall power gain and phase shift of the multi-path channel from ET$_m$ to ER, respectively, given by
\begin{equation*}
\beta_m = \l(\sum_{l=1}^{L_m} a_{m,l}\cos(2\pi f_c \tau_{m,l})\r)^2 + \l(\sum_{l=1}^{L_m} a_{m,l}\sin(2\pi f_c \tau_{m,l})\r)^2, 
\end{equation*}
and 
\begin{equation*}
\theta_m = \arctan\l(\frac{\sum_{l=1}^{L_m} a_{m,l}\sin(2\pi f_c \tau_{m,l})}{\sum_{l=1}^{L_m} a_{m,l}\cos(2\pi f_c \tau_{m,l})}\r).
\end{equation*}
The average harvested power at ER, denoted by $Q$, is then given by
\begin{align}
Q &=  \frac{\rho}{T}\int_{0}^{T} |r(t)|^2 dt \nn\\
& = \rho P \bigg(\sum_{m=1}^M \beta_m  
 + \sum_{i,j=1, i\neq j}^M  \sqrt{\beta_i \beta_j} \cos((\phi_i - \theta_i) - (\phi_j - \theta_j))     \bigg), \label{Eq:P_r}
\end{align}
where $0<\rho \leq 1$ is the energy conversion efficiency at ER, and we assume $\rho=1$ in the sequel for notational convenience; and $T = \frac{1}{f_c}$ is the period of the carrier signal.

If each ET$_m$ perfectly knows the phase shift of its channel $\theta_m$, the optimal transmit phase that maximizes the harvested power $Q$ in \eqref{Eq:P_r} is given by $\phi_m^\star = \theta_m$, $m=1,...,M$. We refer to this case as the \emph{optimal EB} in the sequel. With the optimal EB, the maximum harvested power at ER, denoted by $Q^\star$, is thus given by
\begin{equation} \label{Eq:P_r max}
Q^\star = P\l(\sum_{m=1}^M \beta_m + \sum_{i,j=1, i\neq j}^M\sqrt{\beta_i \beta_j}     \r).
\end{equation}
In practice, only imperfect CSI is available at each ET, and thus the maximum harvested power in \eqref{Eq:P_r max} with the optimal EB only provides a performance upper bound for practical distributed WET systems. In this paper, we propose a new training scheme for practical distributed WET systems to efficiently implement the optimal EB with low complexity and feedback  overhead. First, in Section~\ref{Section:PhaseAdaptation}, we present a new phase adaptation algorithm for a single ET to iteratively adapt its transmit phase to optimally align to the phase of the sum-signal received at ER from other active ETs, based on one-bit energy feedback from ER. Next, in Section~\ref{Section:Protocol}, we present a distributed EB protocol where the ETs sequentially adapt their transmit phases using the phase adaptation algorithm proposed in Section~\ref{Section:PhaseAdaptation}. 

\section{Phase Adaptation Algorithm for a Single ET} \label{Section:PhaseAdaptation}

\begin{figure} 
\centering
\includegraphics[width=6cm]{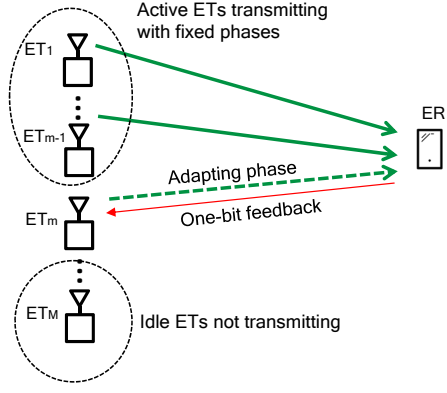}
\caption{Illustration of the transmit phase adaptation by ET$_m$ based on one-bit feedback from ER.} \label{Fig:Adapt}
\end{figure}

Suppose that ET$_1$, ET$_2$,$...$, and ET$_{m-1}$, are transmitting with fixed phases $\phi_1 = \bar{\phi}_1$, $\phi_2 = \bar{\phi}_2,...$, $\phi_{m-1} = \bar{\phi}_{m-1}$, respectively, whereas ET$_{m+1},...,$ ET$_M$, are idle (not transmitting), as shown in Fig.~\ref{Fig:Adapt}. In this section, we present an algorithm for  ET$_m$ to iteratively adjust its transmit phase $\phi_m$ so that the received signal can be coherently added to the sum-signal from all the other $m-1$ active ETs, based on one-bit energy feedback from ER. 

For the purpose of exposition, we express the received signal at ER from all active ETs (including ET$_m$) as
\begin{align} \label{Eq:RxSignal3}
r_m(t) = \sqrt{2P}\bigg(\sqrt{\beta_m} \cos(2\pi f_c t + \phi_m - \theta_m)  + \sum_{i=1}^{m-1} \sqrt{\beta_i} \cos(2\pi f_c t + \bar{\phi}_i - \theta_i) \bigg).
\end{align}
Let $Q_m(\phi_m)$ denote the average harvested power at ER with transmit phase $\phi_m$ of ET$_m$. We thus have
\begin{align}
Q_m(\phi_m) &=  \frac{1}{T}\int_{0}^{T} |r_m(t)|^2 dt \nn\\
& = P\l(\beta_m + \alpha + 2\sqrt{\beta_m \alpha} \cos(\phi_m - ( \theta_m - \varphi))\r), \label{Eq:P_r i} 
\end{align}
where 
\begin{equation*}
\alpha \triangleq \l(\sum_{i=1}^{m-1}\sqrt{\beta_i}\cos(\bar{\phi}_i - \theta_i) \r)^2 + \l(\sum_{i=1}^{m-1}\sqrt{\beta_i}\sin(\bar{\phi}_i - \theta_i) \r)^2,
\end{equation*}
and 
\begin{equation*}
\varphi \triangleq \arctan\l(\frac{\sum_{i=1}^{m-1}\sqrt{\beta_i}\sin(\bar{\phi}_i - \theta_i)}{\sum_{i=1}^{m-1}\sqrt{\beta_i}\cos(\bar{\phi}_i - \theta_i)} \r).
\end{equation*}

Let $\phi_m^*$ denote the optimal transmit phase of ET$_m$ to maximize $Q_m(\phi_m)$ given in \eqref{Eq:P_r i}, which is given by $\phi_m^* = \theta_m - \varphi$. The proposed algorithm for ET$_m$'s phase adaptation to converge to $\phi_m^*$ is explained as follows.  As illustrated in Fig.~\ref{Fig:Algorithm}, in each training interval, ET$_m$ sequentially transmits to ER with a pair of two preassigned phases $\psi$ and $\psi'$, respectively, each with duration $T_s$,\footnote{In practice, $T_s$ needs to be sufficiently large so that ER can accurately measure the harvested power with each transmit phase. For example, $T_s$ can be set as a certain number of periods of the carrier signal, i.e., $T_s = k T$ where $k$ is a sufficiently large integer.} i.e., each interval is of total duration $2T_s$. The ER measures the average power received corresponding to each of the two phases in each interval and at the end of the interval it sends back to ET$_m$ one bit  indicating whether $Q_m(\psi)> Q_m(\psi')$ or $Q_m(\psi)< Q_m(\psi')$, or equivalently $\cos(\psi - \phi_m^*) > \cos(\psi' - \phi_m^*)$ or $\cos(\psi - \phi_m^*) < \cos(\psi' - \phi_m^*)$, according to \eqref{Eq:P_r i}. Notice that for simplicity in this paper we assume that the power measurement at ER is perfect at each training/feedback interval.  Let $\cA^{(n)}\subseteq [-\pi,\pi)$ denote the working set of ET$_m$ at the $n$th feedback interval, where $\phi_m^*\in \cA^{(n)}$. If  $\cos(\psi - \phi_m^*) > \cos(\psi' - \phi_m^*)$ at the $n$th feedback interval, ET$_m$ can infer that the target phase $\phi_m^*$ is located in $ \cA^{(n)}\backslash \{\theta\in [-\pi,\pi): \cos(\theta - \psi) < \cos(\theta - \psi')\}$, and vice versa; as a result, ET$_m$ can reduce its working set in half (see Fig.~\ref{Fig:Algorithm}). At the next feedback interval, ET$_m$ updates $\psi$ and $\psi'$ to be $\psi = \max_{\theta\in \cA^{(n)}} \theta$ and $\psi' = \min_{\theta\in \cA^{(n+1)}} \theta$, i.e., the two boundary phase values of the current working set. The above procedure is repeated for $N$ feedback intervals, and finally ET$_m$ sets $\bar{\phi}_m = \frac{\psi + \psi'}{2}$. This phase adaptation algorithm, referred to as (A1), is summarized in \textbf{Algorithm 1}.

\begin{figure} 
\centering
\includegraphics[width=11cm]{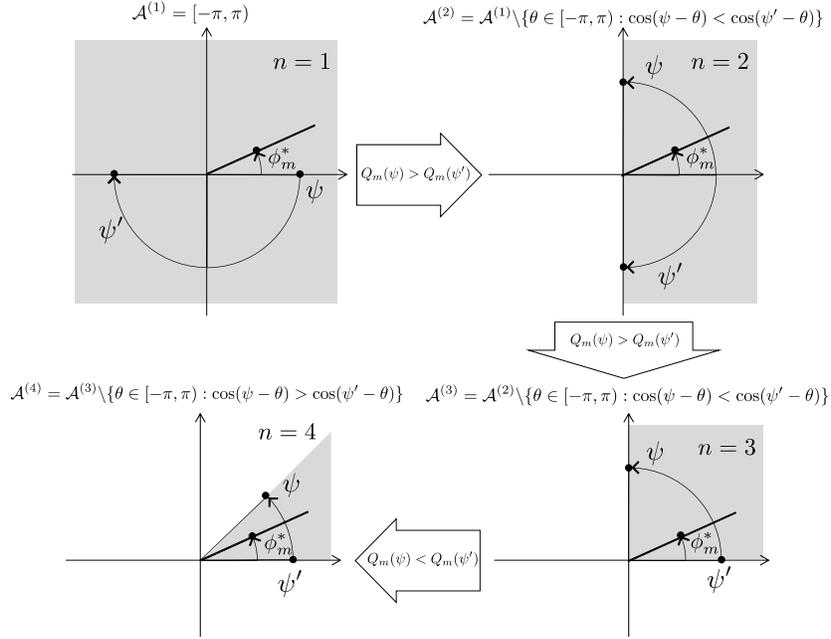}
\caption{Illustration of the proposed phase adaptation algorithm for ET$_m$ to find the target phase $\phi_m^*$, where the shaded regions indicate the working set $\cA^{(n)}$ at the beginning of the $n$th feedback interval. } \label{Fig:Algorithm}
\end{figure}

\begin{algorithm}
\caption{(A1): Phase Adaptation Algorithm for ET$_m$}
\begin{algorithmic}[1]
\State \textbf{Initialize:} ET$_m$ sets $\cA^{(1)} = [-\pi, \pi)$, $\psi = 0$, and $\psi'  = -\pi$.
\For{ $n=1 : N$}
\State ET$_m$ transmits with $\phi_m = \psi$ and then $\phi_m = \psi'$.
\State ER compares $Q_m(\psi)$ and $Q_m(\psi')$ and feeds back the corresponding one-bit to ET$_m$.
\If{ $Q_m(\psi) > Q_m(\psi')$}
\State ET$_m$ sets $\cA^{(n+1)} = \cA^{(n)}\backslash \{(\theta\in [-\pi,\pi): \cos(\theta - \psi) < \cos(\theta - \psi')\}$.
\Else
\State ET$_m$ sets $\cA^{(n+1)} = \cA^{(n)}\backslash \{(\theta\in [-\pi,\pi): \cos(\theta - \psi) > \cos(\theta - \psi')\}$.
\EndIf
\State ET$_m$ sets $\psi = \max_{\theta\in \cA^{(n+1)}} \theta$ and $\psi' = \min_{\theta\in \cA^{(n+1)}} \theta$. 
\EndFor
\State ET$_m$ sets $\bar{\phi}_m = \frac{\psi + \psi'}{2}$.
\end{algorithmic}
\end{algorithm}

\begin{remark} \label{Remark:WorstCase}
We define the phase-error between the phase determined by (A1) at ET$_m$, i.e., $\bar{\phi}_m$, and the optimal phase $\phi_m^*$ as
\begin{equation} \label{Eq:Error}
e_m = \bar{\phi}_m - \phi_m^* = \bar{\phi}_m - (\theta_m - \varphi).
\end{equation}
Since the working set is divided into two (i.e., bisected) after each feedback interval and we choose $\bar{\phi}_m = \frac{\psi + \psi'}{2}$ at the end of the algorithm, it can be shown that the absolute value of the phase-error at ET$_m$ is upper-bounded by
\begin{equation} \label{Eq:ErrorBound}
|e_m| \leq \frac{\pi}{2^N}.
\end{equation}
In other words, after $N$ feedback intervals, the worst-case error between the estimated phase and the optimal phase is no larger than $2^{-N}\pi$, which exponentially decreases to zero with increasing $N$.
\end{remark}

\section{Distributed EB} \label{Section:Protocol}
In this section, we present a distributed EB scheme based on (A1) proposed in Section~\ref{Section:PhaseAdaptation}, for all $M$ ETs to collaboratively send power to ER. 

\subsection{Distributed Protocol} \label{Subsection:Protocol}
The proposed distributed EB protocol is shown in Fig.~\ref{Fig:Protocol}, and explained as follows. 
\begin{itemize}
\item To initiate distributed EB, ER sends a ``start" signal to all $M$ ETs.\footnote{In practice, ER may send the ``start" signal when the harvested power becomes below some predefined threshold due to e.g., channel variations. }

\item Once the ETs receive the ``start" signal, they stop transmitting, except ET$_1$ which transmits with an arbitrary fixed phase $\bar{\phi}_1$. Without loss of generality, we assume $\bar{\phi}_1 = 0$. 

\item ET$_2$ adapts its phase via (A1) for $N$ feedback intervals to match  ET$_1$'s signal at ER. After the adaptation, it keeps transmitting with the determined phase $\bar{\phi}_2$.

\item ET$_3$ adapts its phase via (A1) for $N$ feedback intervals to match ET$_1$'s and ET$_2$'s signals at ER. After the adaptation, it keeps transmitting with the determined phase $\bar{\phi}_3$.

\item The above procedure is repeated sequentially, until ET$_M$.
\end{itemize}

After all $M$ ETs set their phases to be $\phi_m = \bar{\phi}_m$, $m=1,...,M$, via the above distributed EB protocol, the harvested power at ER, denoted by $Q_\text{d}$, is given by (via substituting $\phi_m=\bar{\phi}_m$, $m=1,...,M$, into \eqref{Eq:P_r})
\begin{align} \label{Eq:P_r Protocol}
Q_\text{d} = P\bigg(\sum_{m=1}^M \beta_m   +  \sum_{i,j=1, i\neq j}^M\sqrt{\beta_i \beta_j}\cos((\bar{\phi}_i - \theta_i) - (\bar{\phi}_j - \theta_j))\bigg). 
\end{align}
Note that $Q_\text{d}\leq Q^\star$ in general according to \eqref{Eq:P_r max}. In the following, we analyze the performance of the proposed distributed EB scheme.

\begin{figure} 
\centering
\includegraphics[width=10cm]{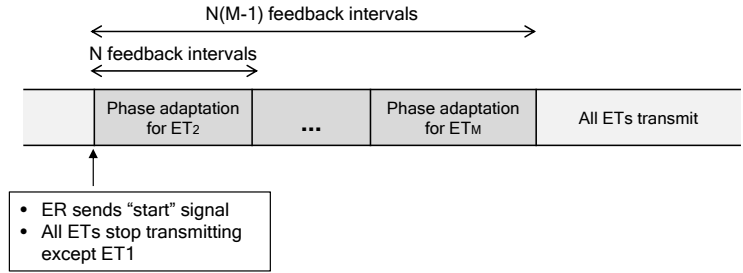}
\caption{The proposed distributed EB protocol. }\label{Fig:Protocol}
\end{figure}

\subsection{Performance Analysis}
To analyze the performance of the proposed scheme, we define the resulting efficiency of the distributed WET system, denoted by $\eta$, as the ratio between $Q_\text{d}$ and $Q^\star$ given in \eqref{Eq:P_r Protocol} and \eqref{Eq:P_r max}, respectively, i.e.,
\begin{equation} \label{Eq:Efficiency}
\eta = \frac{Q_\text{d}}{Q^\star},
\end{equation}
where $0<\eta\leq 1$ since $Q_\text{d}\leq Q^\star$ in general. We first present a lower bound on $\eta$ in the following proposition.

\begin{proposition} \label{Proposition:LowerBound}
The efficiency of the proposed distributed EB scheme defined in \eqref{Eq:Efficiency} is lower-bounded by
\begin{equation} \label{Eq:EfficiencyLower}
\eta \geq \frac{1}{Q^\star}\l(\sum_{m=1}^M \beta_m + \sum_{i,j=1, i\neq j}^M \sqrt{\beta_i \beta_j}\cos^2\l(\frac{\pi}{2^N}\r) \r)
\end{equation}
\end{proposition}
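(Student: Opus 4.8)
The plan is to recast everything in phasor form and run a short induction over the $M$ ETs. Write $g_m = \sqrt{\beta_m}\,e^{\,j(\bar\phi_m - \theta_m)}$ for the complex baseband contribution of ET$_m$ to the received signal at the end of the protocol, so that \eqref{Eq:P_r Protocol} reads $Q_\text{d} = P\,|S_M|^2$ with $S_m \triangleq \sum_{i=1}^m g_i$, while \eqref{Eq:P_r max} reads $Q^\star = P\,\big(\sum_{m=1}^M \sqrt{\beta_m}\big)^2$. Put $R_m = |S_m|$ and $\eps_N = \pi/2^N$. The first step is to derive, for $m = 2,\dots,M$, the recursion
\begin{equation*}
R_m^2 \;=\; R_{m-1}^2 + \beta_m + 2\,R_{m-1}\sqrt{\beta_m}\,\cos e_m ,
\end{equation*}
where $e_m$ is the phase-error of \eqref{Eq:Error}. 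This comes straight out of \eqref{Eq:P_r i} evaluated at $\phi_m = \bar\phi_m$: when ET$_m$ runs (A1) the ETs $1,\dots,m-1$ transmit with their already-fixed phases and ETs $m+1,\dots,M$ are idle, so the $\alpha$ appearing in \eqref{Eq:P_r i} is exactly $R_{m-1}^2$ (it is the squared magnitude of that frozen partial sum), $\phi_m^* = \theta_m - \varphi$, and $Q_m(\bar\phi_m) = P R_m^2$ by definition; equating the two expressions gives the displayed identity.

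Next I would feed in Remark~\ref{Remark:WorstCase}, which states $|e_m|\le \eps_N$; assuming $N\ge 1$ so that $\eps_N\le \pi/2$, monotonicity of the cosine on $[0,\pi]$ gives $\cos e_m \ge \cos\eps_N \ge 0$, hence $R_m^2 \ge R_{m-1}^2 + \beta_m + 2\cos\eps_N\, R_{m-1}\sqrt{\beta_m}$. I then claim, and prove by induction on $m$, that
\begin{equation*}
R_m^2 \;\ge\; \sum_{i=1}^m \beta_i \;+\; \cos^2\eps_N \sum_{i,j=1,\, i\neq j}^m \sqrt{\beta_i\beta_j}, \qquad m = 1,\dots,M .
\end{equation*}
The base case $m=1$ is just $R_1^2 = \beta_1$ (ET$_1$ does not adapt). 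For the step, observe that the inductive hypothesis already forces $R_{m-1}^2 \ge \cos^2\eps_N\,\big(\sum_{i=1}^{m-1}\sqrt{\beta_i}\big)^2$, since $\sum_i\beta_i \ge \cos^2\eps_N\sum_i\beta_i$ and $\big(\sum_i\sqrt{\beta_i}\big)^2 = \sum_i\beta_i + \sum_{i\neq j}\sqrt{\beta_i\beta_j}$; hence $R_{m-1}\ge \cos\eps_N\sum_{i=1}^{m-1}\sqrt{\beta_i}$. Substituting this bound together with the inductive hypothesis into the bounded recursion, and using $\sum_{i,j\le m,\,i\ne j}\sqrt{\beta_i\beta_j} = \sum_{i,j\le m-1,\,i\ne j}\sqrt{\beta_i\beta_j} + 2\sqrt{\beta_m}\sum_{i=1}^{m-1}\sqrt{\beta_i}$, the bound for index $m$ drops out term by term. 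Taking $m=M$ and using $\eta = R_M^2/\big(\sum_m\sqrt{\beta_m}\big)^2 = P R_M^2/Q^\star$ yields \eqref{Eq:EfficiencyLower}.

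The only step that really needs care is the first one: verifying that the sequential bookkeeping of the protocol is self-consistent, i.e.\ that at the moment ET$_m$ finishes (A1) the ``interference'' term captured by $(\alpha,\varphi)$ in \eqref{Eq:P_r i} is precisely the frozen partial sum $S_{m-1}$, so that $\alpha = R_{m-1}^2$ and the residual angle between $g_m$ and $S_{m-1}$ equals $e_m$ (its sign is immaterial, only $\cos e_m$ enters the power). Everything afterwards is an elementary induction. One should also flag the mild proviso $N\ge 1$ used to ensure $\cos\eps_N\ge 0$; for $N=0$ the bound \eqref{Eq:EfficiencyLower} is vacuous in any case.
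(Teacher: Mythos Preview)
Your proposal is correct and follows essentially the same route as the paper: both set up the sequential recursion $R_m^2 = R_{m-1}^2 + \beta_m + 2R_{m-1}\sqrt{\beta_m}\cos e_m$ from the protocol and \eqref{Eq:P_r i}, and both close the induction by lower-bounding $R_{m-1}$ (equivalently $\sqrt{Q_\text{d}^{(k)}}$) via the inductive hypothesis together with the slack $\beta_i \ge \beta_i\cos^2(\cdot)$ inside the square root. The only cosmetic difference is that you apply the uniform bound $|e_m|\le \pi/2^N$ \emph{before} the induction and carry $\cos^2\eps_N$ through, whereas the paper first proves the sharper intermediate bound $Q_\text{d}\ge \sum_m\beta_m + \sum_{i\ne j}\sqrt{\beta_i\beta_j}\cos e_i\cos e_j$ with the individual errors and substitutes $|e_m|\le\pi/2^N$ at the end.
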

\begin{proof}
See the Appendix.
\end{proof}

It can be observed from \eqref{Eq:EfficiencyLower} that as the number of per-ET feedback intervals in (A1), $N$, becomes large, the lower bound approaches one. In other words, for the ideal case of $e_m = 0$, $m=1,...,M$, which is obtained when $N\rightarrow \infty$, the proposed protocol achieves the maximum harvested power by the optimal EB given in \eqref{Eq:P_r max}. 

Next, we analyze the required number of per-ET feedback intervals for (A1) to achieve a given target efficiency, denoted by $0<\hat{\eta}\leq 1$. By re-arranging the terms in the inequality $\eta\geq \hat{\eta}$ and using \eqref{Eq:EfficiencyLower}, we obtain the following corollary. 

\begin{corollary}
If the number of per-ET feedback intervals in (A1) satisfies 
\begin{equation} \label{Eq:WorstCaseBound}
N \geq  \log_2\l(\frac{\pi}{\arccos\l(\sqrt{\hat{\eta} - \frac{(1-\hat{\eta})\sum_{m=1}^M \beta_m}{\sum_{i,j=1, i\neq j}^M \sqrt{\beta_i \beta_j}} }\r)}\r),
\end{equation}
then it holds that $\eta\geq\hat{\eta}$.
\end{corollary}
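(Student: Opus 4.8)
The plan is to derive \eqref{Eq:WorstCaseBound} directly from the lower bound of Proposition~\ref{Proposition:LowerBound}: I will exhibit a sufficient condition on $N$ that forces that lower bound to be at least the target efficiency $\hat{\eta}$, and since $\eta$ is never smaller than its lower bound, the same condition will guarantee $\eta\geq\hat{\eta}$.

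First I would rewrite the bound \eqref{Eq:EfficiencyLower} using the expression \eqref{Eq:P_r max} for $Q^\star$, so that it reads
\begin{equation*}
\eta \;\geq\; \frac{\sum_{m=1}^M \beta_m + \cos^2\!\l(\frac{\pi}{2^N}\r)\sum_{i,j=1,\,i\neq j}^M \sqrt{\beta_i\beta_j}}{\sum_{m=1}^M \beta_m + \sum_{i,j=1,\,i\neq j}^M \sqrt{\beta_i\beta_j}}.
\end{equation*}
Imposing that the right-hand side be $\geq\hat{\eta}$, clearing the positive denominator, and collecting the $\sum_m\beta_m$ and $\sum_{i\neq j}\sqrt{\beta_i\beta_j}$ terms, I obtain the equivalent requirement
\begin{equation*}
\cos^2\!\l(\frac{\pi}{2^N}\r) \;\geq\; \hat{\eta} - \frac{(1-\hat{\eta})\sum_{m=1}^M \beta_m}{\sum_{i,j=1,\,i\neq j}^M \sqrt{\beta_i\beta_j}} \;\triangleq\; \gamma .
\end{equation*}
If $\gamma\leq 0$ this holds for every $N$ and the claim is immediate; otherwise $\gamma\in(0,1)$ — which is precisely the range in which the radicand, hence the right-hand side of \eqref{Eq:WorstCaseBound}, is well defined — and the requirement is equivalent to $\cos\!\l(\frac{\pi}{2^N}\r)\geq\sqrt{\gamma}$ as long as $\tfrac{\pi}{2^N}\in[0,\tfrac{\pi}{2}]$.

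Next I would invert this into the stated bound on $N$. Since $\arccos(\sqrt{\gamma})\in(0,\tfrac{\pi}{2})$ when $\gamma\in(0,1)$, the right-hand side of \eqref{Eq:WorstCaseBound} exceeds $\log_2 2=1$, so any $N$ satisfying \eqref{Eq:WorstCaseBound} automatically has $\tfrac{\pi}{2^N}\in(0,\tfrac{\pi}{2})$, the interval on which $\cos$ is nonnegative and strictly decreasing. On that interval $\cos\!\l(\frac{\pi}{2^N}\r)\geq\sqrt{\gamma}$ is equivalent to $\tfrac{\pi}{2^N}\leq\arccos(\sqrt{\gamma})$, hence to $2^N\geq\pi/\arccos(\sqrt{\gamma})$, i.e. to \eqref{Eq:WorstCaseBound}. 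Reading this chain of equivalences in the forward direction then yields $\eta\geq\hat{\eta}$, which finishes the proof.

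The only step needing care — and the closest thing here to an obstacle — is the bookkeeping of inequality directions: passing from the quadratic inequality to \eqref{Eq:WorstCaseBound} goes through a square root, the decreasing map $\cos$, the decreasing map $\arccos$, and a reciprocal, so one must check that the overall sense is preserved, together with the domain check $\tfrac{\pi}{2^N}\in[0,\tfrac{\pi}{2}]$ and the well-definedness $\gamma\in[0,1)$ of the quantity appearing inside \eqref{Eq:WorstCaseBound}. Beyond this there is no real difficulty; the corollary is essentially an algebraic rearrangement of Proposition~\ref{Proposition:LowerBound}.
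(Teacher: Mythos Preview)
Your proposal is correct and follows exactly the route the paper indicates: the corollary is obtained by imposing that the lower bound \eqref{Eq:EfficiencyLower} of Proposition~\ref{Proposition:LowerBound} be at least $\hat{\eta}$ and then solving for $N$. Your treatment is in fact more careful than the paper's one-line justification, since you explicitly track the domain conditions $\gamma\in[0,1)$ and $\tfrac{\pi}{2^N}\in[0,\tfrac{\pi}{2}]$ needed for the inversion steps to be valid.
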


It is worth noting from \eqref{Eq:WorstCaseBound} that for the special case of $\beta_1 = \beta_2 = ... = \beta_K = \beta$ (i.e., all ETs have identical channel gains to ER), we have $\sum_{m=1}^M \beta_m = M\beta$ and $\sum_{i,j=1, i\neq j}^M \sqrt{\beta_i \beta_j} = M(M-1)\beta$. As a result, \eqref{Eq:WorstCaseBound} becomes
\begin{equation} \label{Eq:WorstCaseBoundBeta}
N \geq  \log_2\l(\frac{\pi}{\arccos\l(\sqrt{ \frac{M\hat{\eta} - 1}{M-1}}\r)}\r) .
\end{equation}
For instance, if $M=5$, i.e., if there are five ETs, \eqref{Eq:WorstCaseBoundBeta} yields $N\geq4.8094$ with $\hat{\eta} = 0.99$ and $N\geq 6.4731$ with $\hat{\eta} = 0.999$, respectively. Thus, each ET needs $N=5$ and $N=7$ feedback intervals with (A1) to ensure 99\% and 99.9\% of the optimal EB gain achieved by the proposed distributed EB scheme, respectively.

\section{Numerical Results}
In this section, we evaluate the performance of the proposed distributed EB scheme by simulation. For the simulation, we set the transmit power of each ET as $P=1$ Watt (W), the number of signal paths between each ET$_m$ and ER as $L_m = 1$ (which corresponds to the line of sight (LoS) environment). Moreover, the channel power gain $\beta_m$ is modeled by  path-loss only, given by $\beta_m = c_0 (r_m/r_0)^{-\delta}$, where $c_0 = -20$ dB is a constant attenuation due to the path-loss at a reference distance $r_0 = 1$ meter (m), $\delta = 3$ is the path-loss exponent, and $r_m$ is the distance between ET$_m$ and the ER. We choose the distribution of the distance $r_m$ and the random phase shift $\theta_m$ of each ET$_m$ to be  $r_m\sim \text{Uniform}(5,15)$ (in meters) and $\theta_m\sim\text{Uniform}(-\pi,\pi)$, respectively. 

\begin{figure} 
\centering
\includegraphics[width=8cm]{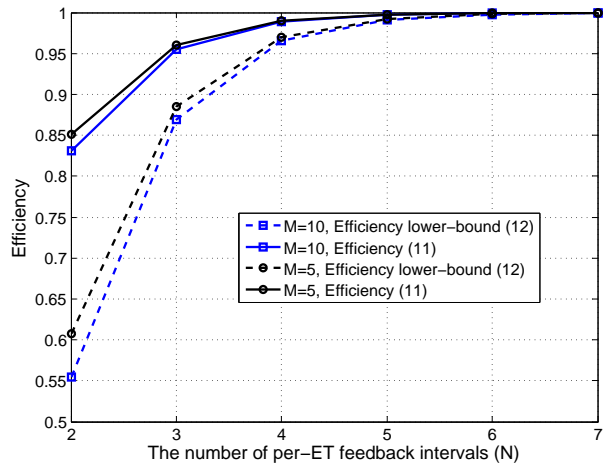}
\caption{The efficiency of the proposed distributed EB scheme, $\eta$, versus the number of per-ET feedback intervals, $N$. } \label{Fig:Efficiency}
\end{figure}

Under the above setup, Fig.~\ref{Fig:Efficiency} shows the efficiency of the proposed distributed EB scheme given in \eqref{Eq:Efficiency} and its lower bound in \eqref{Eq:EfficiencyLower} for the cases of $M=5$ and $M=10$, by averaging over 5000 randomly generated $r_m$ and $\theta_m$, $m=1,...,M$. First, it is observed that for both cases of $M=5$ and $M=10$,  the efficiency lower-bound given in \eqref{Eq:EfficiencyLower} becomes tighter as the number of per-ET feedback intervals $N$ increases, and eventually converges to 1, as compared to the exact efficiency in \eqref{Eq:Efficiency}. Furthermore, it can be seen from Fig.~\ref{Fig:Efficiency} that $N=4$ already results in the efficiency higher than 0.95 for both cases of $M=5$ and $M=10$.

\begin{figure} 
\centering
\includegraphics[width=8cm]{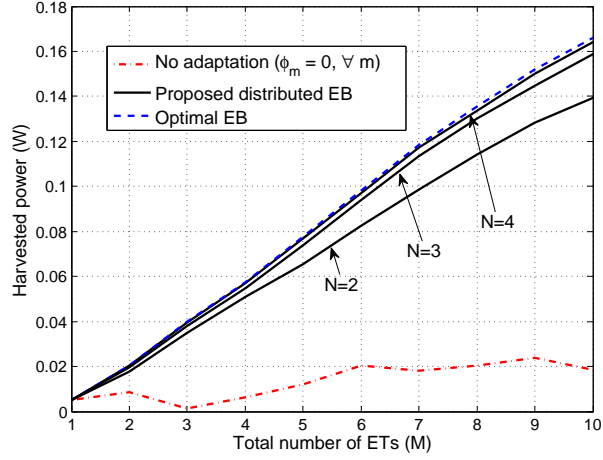}
\caption{The harvested power versus the total number of ETs, $M$. } \label{Fig:EBgain}
\end{figure}

In Fig.~\ref{Fig:EBgain}, we compare the harvested power by ER versus the total number of ETs $M$ with different numbers of per-ET feedback intervals $N$, for one set of random realizations of $r_m$ and $\theta_m$, $m=1,...,M$. First, it is observed that the harvested power of the proposed distributed EB scheme keeps increasing with $M$, thanks to the phase adaptation of each ET via (A1); whereas that of no adaptation, in which each ET$_m$ fixes its phase to be $\bar{\phi}_m = 0$ at all time, fluctuates over $M$ in general, which is due to the fact that the channel phases from ETs to ER are different and as a result, their received signals may add constructively or destructively at ER. Second, it can be seen from Fig,~\ref{Fig:EBgain} that, the larger the number of per-EB feedback intervals $N$, the higher is the harvested power achieved by the proposed distributed EB scheme, which is expected since larger $N$ yields more accurate estimated phase $\phi_m^*$ for each ET$_m$ via algorithm (A1). 

\begin{figure} 
\centering
\includegraphics[width=7.5cm]{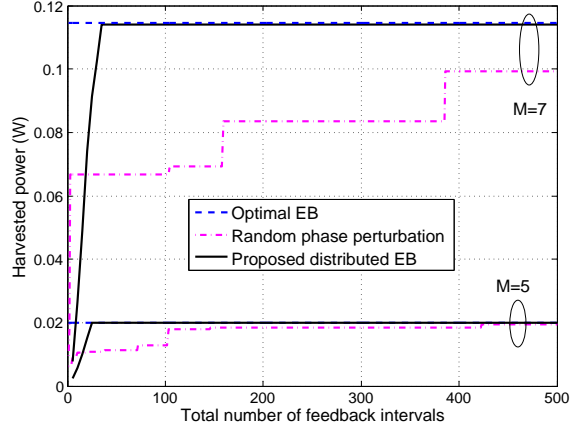}
\caption{The harvested power versus the total number of feedback intervals. } \label{Fig:Comparison}
\end{figure}

Next, in Fig.~\ref{Fig:Comparison}, we compare the convergence performance of our proposed distributed EB scheme with the random phase perturbation scheme proposed in \cite{J_MHMB:2010}, for the cases of $M=5$ and $M=7$, respectively, for one set of random realizations of $r_m$ and $\theta_m$, $m=1,...,M$. We set the number of per-ET feedback intervals to be $N=5$ for our proposed scheme. As a result, our proposed scheme requires $N(M-1)= 20$ feedback intervals in total for the case of $M=5$ and $30$ feedback intervals in total for the case of $M=7$, in order for all ETs to set their phases $\bar{\phi}_m$, $m=1,...,M$  (see Fig.~\ref{Fig:Protocol}). However, the random phase perturbation scheme needs more than 100 feedback intervals for the harvested power to converge, as observed from Fig.~\ref{Fig:Comparison}. This is because,  in our proposed scheme, the phase adaptation of each ET requires only a fixed number of feedback intervals $N$, whereas in the random phase perturbation scheme, all ETs randomly perturb their phases at the same time in each interval, and as a result it takes longer time for the transmit phases of all ETs to converge in general. 

\begin{figure} 
\centering
\includegraphics[width=8cm]{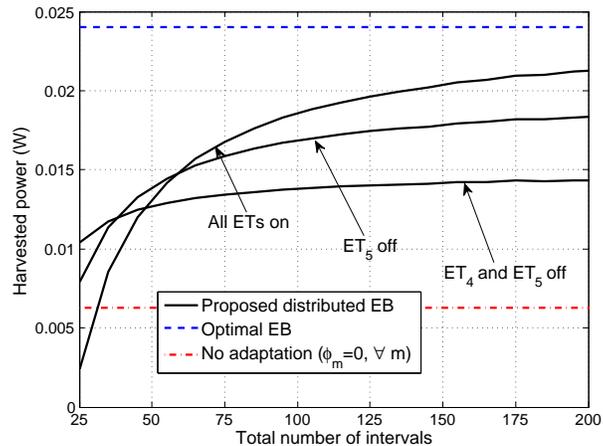}
\caption{The average harvested power versus the total number of intervals (including both training and energy transmission) with $M=5$ and $N=5$. } \label{Fig:TradeOff}
\end{figure}

Finally, in Fig.~\ref{Fig:TradeOff}, we plot the average harvested power per interval versus the total number of intervals (including both training and energy transmission shown in Fig.~\ref{Fig:Protocol}), by averaging over 50000 randomly generated $r_m$ and $\theta_m$, $m=1,...,M$, to analyze the effect of the training overhead on the performance of the proposed distributed EB scheme. Note that we assume $M=5$ and $N=5$ in this case. Moreover, by re-ordering the channel power gains of ETs as $\beta_1\geq\beta_2\geq...\geq\beta_5$, we also plot average harvested powers for the cases without ET$_5$ (i.e., the ET with the weakest channel to ER is off) or both ET$_5$ and ET$_4$ (i.e., the two ETs with lowest channel gains are off).  It can be seen from Fig.~\ref{Fig:TradeOff} that if the total number of intervals is less than 60, it is no more optimal to let all ETs transmit as the ETs with weak channels do not contribute much to the overall EB gain but require the same training time for phase adaptation (otherwise, their received signals may not add coherently to other ETs' received signals); thus, they should be switched off to maximize the average harvested power at ER.  Moreover, when the total number of intervals is less than 30, the performance with all ETs on is even worse than that of no adaptation with $\phi_m=0$, $m=1,...,M$. However, as the total number of intervals increases, the average harvested power also increases with more ETs switched on and finally approaches the maximum harvested power by the optimal EB with all ETs on, due to the reduced training overhead. 

\section{Conclusion}
In this paper, we proposed a new channel training and EB design for distributed WET systems, where ETs adjust their transmit phases independently to achieve collaborative WET to a single ER. Based on a new phase adaptation algorithm for each single ET to iteratively adapt its phase to match those of all other active ETs based on one-bit energy feedback from the ER, we devised  a distributed EB protocol, which is shown to converge to the optimal EB performance efficiently even with small number of per-ET feedback intervals. The proposed distributed EB scheme is also shown to outperform the existing scheme based on random phase perturbation in terms of convergence speed and energy efficiency.

\appendices 

\appendix[Proof of Proposition~\ref{Proposition:LowerBound}]

We need to show that the following inequality holds.
\begin{align} \label{Eq:Induction}
Q_\text{d} &\geq \sum_{m=1}^M \beta_m + \sum_{i=1,j=1, i\neq j}^M \sqrt{\beta_i \beta_j} \cos(e_i)\cos(e_j),
\end{align}
where $e_m$ is defined in \eqref{Eq:Error}. We prove \eqref{Eq:Induction} via mathematical induction as follows. First, we show that \eqref{Eq:Induction} holds for $M=2$. For convenience, we denote $Q_\text{d}$ given in \eqref{Eq:P_r Protocol} for $M=k$ as $Q_\text{d}^{(k)}$. In the case of $M=2$, when ET$_2$ adapts its phase $\phi_2$ via (A1), only ET$_1$ is active with $\bar{\phi}_1 = 0$ based on the protocol described in Section~\ref{Subsection:Protocol}, and thus the estimated phase at ET$_2$ is given by $\bar{\phi}_2 = \theta_2 - \theta_1 + e_2$. Thus, the harvested power at ER is given by
\begin{equation}
Q_\text{d}^{(2)} = \beta_1 + \beta_2 + 2\sqrt{\beta_1\beta_2}\cos(e_2),
\end{equation}
which implies that  \eqref{Eq:Induction} holds for $M=2$ with equality. Next, we assume that \eqref{Eq:Induction} holds for $M=k$, $k\geq 2$, i.e., the following inequality is true:
\begin{equation} \label{Eq:InductionAssumption}
Q_\text{d}^{(k)} \geq \sum_{m=1}^k \beta_m + \sum_{i=1,j=1, i\neq j}^k \sqrt{\beta_i \beta_j} \cos(e_i)\cos(e_j).
\end{equation}
Then, when $M=k+1$, the harvested power at ER by the distributed EB protocol is given by
\begin{align*}
Q_\text{d}^{(k+1)} &=  \beta_{k+1} + Q_\text{d}^{(k)} + 2 \cos(e_{k+1})\sqrt{\beta_{k+1}Q_\text{d}^{(k)}} \\
& \overset{(a)}\geq \beta_{k+1 } + \sum_{m=1}^k \beta_m + \sum_{i=1,j=1, k\neq l}^k \sqrt{\beta_i \beta_j} \cos(e_i)\cos(e_j)  \\
& \qquad + 2\cos(e_{k+1})\sqrt{\beta_{k+1}}\bigg(\sum_{m=1}^k \beta_m + \sum_{i=1,j=1, i\neq j}^k \sqrt{\beta_i \beta_j} \cos(e_i)\cos(e_j)\bigg)^{\frac{1}{2}}  \\
& \overset{(b)}\geq \sum_{m=1}^{k+1} \beta_m +  \sum_{i=1,j=1, i\neq j}^k \sqrt{\beta_i \beta_j} \cos(e_i)\cos(e_j) \\ 
&\qquad + 2\cos(e_{k+1}) \sqrt{\beta_{k+1}}\bigg(\sum_{m=1}^k \beta_m \cos^2(e_m)  +  \sum_{i=1,j=1, k\neq l}^k \sqrt{\beta_i \beta_j} \cos(e_i)\cos(e_j) \bigg)^{\frac{1}{2}} \\
& = \sum_{m=1}^{k+1} \beta_m +  \sum_{i=1,j=1, i\neq j}^{k+1} \sqrt{\beta_i \beta_j} \cos(e_i)\cos(e_j). 
\end{align*}
where $(a)$ is due to the assumption in \eqref{Eq:InductionAssumption}, and $(b)$ is due to the fact that $0\leq \cos^2 (e_m) \leq 1$, $m=1,...,k$.  To summarize, we have shown that \eqref{Eq:Induction} holds for $M=k+1$ under the assumption that it holds for $M=k$. Since we have already shown that \eqref{Eq:Induction} is true for $M=2$, we conclude that \eqref{Eq:Induction} holds for any $M\geq 2$.  

Next, to prove \eqref{Eq:EfficiencyLower}, we substitute the error bound given in \eqref{Eq:ErrorBound} into \eqref{Eq:Induction}. It thus follows that 
\begin{align} 
Q_d  \geq   \sum_{m=1}^M \beta_m + \sum_{i=1,j=1, i\neq j}^M \sqrt{\beta_i \beta_j} \cos^2\l(\frac{\pi}{2^N}\r), \label{Eq:EfficiencyLowerProof}
\end{align}
since $\cos (x)$ is a decreasing function in $0\leq x \leq \pi$. Finally, the desired result in \eqref{Eq:EfficiencyLower} can be obtained by substituting \eqref{Eq:EfficiencyLowerProof} into \eqref{Eq:Efficiency}. The proof is thus completed.

\bibliographystyle{IEEEtran}
%\bibliography{IEEEabrv,mybib}{}

\newpage

\end{document}